\let\oldvec\vec
\documentclass[runningheads]{llncs}
\let\vec\oldvec
\usepackage{xcolor}

\usepackage{amsmath,amsfonts,amssymb}
\usepackage{mathtools}
\usepackage{bm}
\usepackage{mathdots}
\usepackage{stmaryrd}
\usepackage{tikz}
\usepackage{float}
\usepackage{cite}
\usepackage{wrapfig}
\usepackage{complexity}
\usepackage{esvect}
\usepackage{xcolor}
\usepackage{etoolbox}
\usepackage{blkarray}
\usepackage{ifthen}
\usepackage{multirow}
\usepackage{thmtools}
\usepackage[firstpage]{draftwatermark}  
\usepackage{array}
\usepackage[ruled,vlined,linesnumbered,nofillcomment]{algorithm2e}
\usepackage{todonotes}

\SetKwComment{Comment}{$\triangleright$\ }{}

\SetCommentSty{mycommfont}
\makeatletter
\patchcmd{\@algocf@start}
  {-1.5em}
  {0pt}
  {}{}
\makeatother


\newcommand{\union}{\cup}
\newcommand{\Union}{\bigcup}
\newcommand{\Land}{\bigwedge}
\newcommand{\Lor}{\bigvee}
\renewcommand{\iff}{\Leftrightarrow} 
\newcommand{\svars}[1]{\textsf{support}(#1)} 

\newcommand\concept[1]{\textit{#1}\xspace}

\providecommand{\tuple}[1]{\ensuremath{\left\langle #1 \right\rangle}}
\providecommand{\set}[1]{\ensuremath{\left\lbrace #1 \right\rbrace}}

\providecommand{\vect}[1]{\ensuremath{( \begin{matrix} #1 \end{matrix} )}}

\newcommand{\defn}{\,\triangleq\,}

\newcommand\con[2]{\ensuremath{#1_{|#2}}}
\renewcommand\exp{\ensuremath{\mathsf{exp}}}

\newcommand\reach{\textsc{Reach}\xspace}
\newcommand{\reachbdd}{\textsc{ReachBdd}\xspace} 
\newcommand{\reachbddpar}{\textsc{ReachBddPar}\xspace}
\newcommand{\reachmdd}{\textsc{ReachMdd}\xspace} 

\newenvironment{proofsketch}{\proof}{\endproof}


\newcommand{\fname}{\textsc} 
\SetFuncSty{fname} 
\SetFuncArgSty{text}
\SetArgSty{text} 
\DontPrintSemicolon
\SetKwIF{If}{ElseIf}{Else}{if}{then}{elif}{else then}{}
\SetKwFor{While}{while}{do}{}
\SetKwProg{Fn}{def}{ :}{end}
\let\oldnl\nl
\newcommand{\nonl}{\renewcommand{\nl}{\let\nl\oldnl}}

\tikzset{main/.style={draw,circle}} 
\tikzset{leaf/.style={draw,minimum width=1.2em,minimum height=1.2em}} 
\tikzset{lddnode/.style={draw,minimum width=1.2em,minimum height=1.2em}}
\tikzset{node distance=10mm}

\newcommand{\bddvar}[1]{\textsf{var}(#1)}
\newcommand{\bddlow}[1]{\ensuremath{#1[0]}}
\newcommand{\bddhigh}[1]{\ensuremath{#1[1]}}
\newcommand{\e}[2]{\ensuremath{#1[#2]}}

\newcommand{\makenode}{\fname{MakeNode}}

\newcommand{\sttuple}[2]{\ensuremath{(#1,#2)}} 
\newcommand{\sspace}{\mathbb{S}}

\let\oldparagraph\paragraph
\renewcommand{\paragraph}[1]{\oldparagraph{#1.}}

\newcommand{\aref}[1]{\hyperref[#1]{App.~\ref*{#1}}} 
 
\emergencystretch=\maxdimen

\hyphenation{Sylvan}
\hyphenation{analysis}
\hyphenation{reach-a-bility}
\hyphenation{Reach-Bdd}
\hyphenation{Reach-Mdd}
\hyphenation{Reach-Bdd-Par}

\usepackage[caption=false]{subfig} 
\usepackage[colorlinks,citecolor=blue!60!black!70,linkcolor=blue!60!black!70]{hyperref}

\newcommand{\anonymize}[2]{#1} 
\newcommand{\arxiv}[2]{#1} 

\begin{document}

\title{\texorpdfstring{A Decision Diagram Operation for Reachability}{A Decision Diagram Operation for Reachability}}

\anonymize{
\author{Sebastiaan Brand\orcidID{0000-0002-7666-2794} 
\and 
Thomas Bäck\orcidID{0000-0001-6768-1478}
\and 
Alfons Laarman\orcidID{0000-0002-2433-4174}
}
\authorrunning{S. Brand, T. Bäck, and A. Laarman}
\institute{Leiden Institute of Advanced Computer Science, \\ Leiden University, Leiden, The Netherlands \\
\email{s.o.brand@liacs.leidenuniv.nl \\ t.h.w.baeck@liacs.leidenuniv.nl \\ a.w.laarman@liacs.leidenuniv.nl}
}}
{
\author{Anonymized for FM 2023}
\authorrunning{Anonymized for FM 2023}
\institute{}
\email{}
}
\maketitle

\SetWatermarkAngle{0}
\SetWatermarkText{\raisebox{12.5cm}{%
  \hspace{0.1cm}%
  \href{https://doi.org/10.5281/zenodo.7333633}{\includegraphics{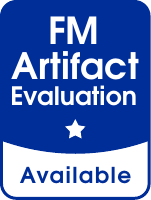}}
  \hspace{9cm}%
  \includegraphics{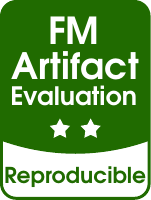}%
}}

\begin{abstract}
Saturation is considered the state-of-the-art method for computing fixpoints with decision diagrams. We present a relatively simple decision diagram operation called \reach that also computes fixpoints. In contrast to saturation, it does not require a partitioning of the transition relation. We give sequential algorithms implementing the new operation for both binary and multi-valued decision diagrams,
 and moreover provide parallel counterparts.
We implement these algorithms and experimentally compare their performance against saturation on 692 model checking benchmarks in different languages. The results show that the \reach operation often outperforms saturation, especially on transition relations with low locality. 
In a comparison between parallelized versions of \reach and saturation we find that \reach obtains comparable speedups up to 16 cores, although falls behind saturation at 64 cores.
Finally, in a comparison with the state-of-the-art model checking tool ITS-tools we find that \reach outperforms ITS-tools on 29\% of models, suggesting that \reach can be useful as a complementary method in an ensemble tool.

\keywords{Model checking  \and Reachability \and Saturation \and Decision diagrams \and BDDs \and MDDs.}
\end{abstract}

\section{Introduction}

\paragraph{Reachability analysis}
Model checking is an important technique for ensuring that systems work according to specification. A core task in model checking is reachability analysis~\cite{biere2002liveness,cook2006terminator}, 
i.e., computing forward or backward reachable states of a system.
Typically, the state space of a program grows exponentially with the number of variables and threads.
One method for dealing with this explosion is the use of symbolic methods such as decision diagrams.
Decision diagrams\cite{bryant1986graph} are directed, acyclic graphs that succinctly represent sets of states  by leveraging the exponential growth in paths from a dedicated root node to a leaf.
The data structure provides various efficient manipulation operations, such as logical disjunction and conjunction and image computation.

In this work, we present a new decision diagram operation for reachability.

\paragraph{Related work}
While SAT-based methods for model checking have become increasingly popular, doing reachability analysis with decision diagrams is still an important component of many state-of-the-art model checking tools, as can be seen in the Model Checking Contest (MCC) \cite{mcc2021,thierry2015symbolic}. 
Symbolic reachability analysis with binary decision diagrams (BDDs) and other variants \cite{bahar1997algebric,kam1998multi,darwiche2011sdd,vinkhuijzen2020symbolic} is done by encoding both the initial system state $S_{\text{init}}$ and its transition relation $R$ in the diagram.
The set of reachable states $S$ is then iteratively computed using the image operation~\cite{mcmillan1992symbolic}, denoted by $S.R$, starting from $S_{\text{init}}$.
Since the order of exploration (e.g. breadth-first search, depth-first search, or other strategies) greatly influences the sizes of the intermediate decision diagrams, various exploration strategies, like saturation \cite{ciardo2001saturation}, chaining \cite{roig1995verification}, and sweep-line \cite{christensen2001sweep},
have been considered.
These algorithms have in common the use of the image computation $S.R$ as their main operation.

Saturation stands out from other approaches, not only because it often performs better~\cite{satunbound} and leading MCC tools use it in decision diagram based reachability  \cite{thierry2015symbolic}, but also because it integrates the image computations into the traversal of the decision diagram of $S$.
Saturation avoids redundant reconstructions by building the decision diagram for the reachable states bottom-up, eagerly \textit{saturating} the bottom nodes by exhaustively applying all relevant transitions.


\paragraph{Contribution}
We present three new decision diagram operations for reachability: \reachbdd and \reachmdd (for BDDs and MDDs respectively), as well as a parallel version of \reachbdd. 
These algorithms partially construct the decision diagram of $S$ from the bottom up, but 
unlike saturation do not require partial relations and can handle monolithic transition relations. 
An additional advantage of these new reachability operations is their relative simplicity in comparison with saturation.


We implement these new operations in the decision diagram package Sylvan~\cite{vandijk2013multi}, and experimentally compare them against saturation on a total of 692 problem instances from three model checking benchmark sets:  DVE (BEEM \cite{pelanek2007beem}), 
Petri nets (MCC \cite{mcc2016}), 
and Promela models~\cite{holzmann1997model,van2013spins}.
We find that our methods are competitive with saturation, and tend to outperform saturation on larger instances. 
The parallel speedups obtained by \reachbdd up to 16 cores are comparable to those achieved in a parallel version of saturation~\cite{vandijk2019saturation}, although they fall behind on 64 cores.
Aside from the comparison against saturation, we also compare \reachmdd against the state-of-the-art model checking tool ITS-tools~\cite{thierry2015symbolic}, where we find that \reachmdd performs better than ITS-tools on 29\% of models, and can therefore be useful as a complementary method in an ensemble tool.


\paragraph{Outline}
\autoref{sec:prelims} discusses preliminaries. \autoref{sec:algorithm} explains the new reachability algorithms, and is then followed by an empirical evaluation of these algorithms in \autoref{sec:experiments}. Finally, \autoref{sec:conclusion} concludes this work.

\section{Preliminaries}
\label{sec:prelims}
%
%

\subsection{Binary decision diagrams}\label{sec:bdd}
Binary decision diagrams (BDDs) \cite{bryant1986graph,somenzi1999binary} are a data structure for representing Boolean functions $f(x_1,\dots, x_n)$, i.e., functions of type
$f : \{0, 1\}^n \to \{0, 1\}$.
Structurally, a BDD is a rooted, directed, acyclic graph with two types of nodes: terminal nodes with values $\{0, 1\}$ and non-terminal nodes $v$ that have two children, $\bddlow{v}$ (\concept{low}) and $\bddhigh{v}$ (\concept{high}), and a variable label $\bddvar{v} \in \{1, \dots, n\} = [n]$, indexing into $\set{x_i}_{i\in[n]}$.
\autoref{fig:example-bdd} shows examples of \concept{ordered}
BDDs, i.e., BDDs where on each path 
variable labels occur in a fixed order $x_1 < x_2 < \dots < x_n$.

\begin{figure}[b!]
    \centering \vspace{-2em}
    \subfloat[\label{sub:a}] {\scalebox{.8}{\begin{tikzpicture}[auto, thick,node distance=1.cm,inner sep=1.5pt] 
	\node[] (0) [] {};
	\node[main] (1) [node distance=.6cm,below of=0] {$x_1$}; 
	\node[main] (2) [below left of=1] {$x_2$};
	\node[main] (3) [below right of=1] {$x_2$};
	\node[main] (4) [below of=2] {$x_3$};
	\node[main] (5) [below of=3] {$x_3$};
	\node[leaf] (6) [below of=4] {0};
	\node[leaf] (7) [below of=5] {1};
	\draw[->] (0) to (1);
	\draw[->] (1) to (2);
	\draw[->,dashed] (1) to (3);
	\draw[->,dashed,bend right] (2) to (6);
	\draw[->] (2) to (4);
	\draw[->] (4) to (6);
	\draw[->,dashed] (4) to (7);
	\draw[->,bend left] (3) to (7);
	\draw[->,dashed] (3) to (5);
	\draw[->] (5) to (6);
	\draw[->,dashed] (5) to (7);
\end{tikzpicture}}} \hspace{4em}
    \subfloat[\label{sub:b}] {\scalebox{.8}{\begin{tikzpicture}[auto, thick,node distance=1cm,inner sep=1.5pt] 
	\node[] (0) [] {};
	\node[main] (1) [node distance=.6cm,below of=0] {$x_1$}; 
	\node[main] (2) [below left of=1] {$x_2$};
	\node[main] (3) [below right of=1] {$x_2$};
	\node[main] (4) [below of=2] {$x_3$};
	\node[main] (5) [below of=3] {$x_3$};
	\node[leaf] (6) [below of=4] {0};
	\node[leaf] (7) [below of=5] {1};
	\draw[->] (0) to (1);
	\draw[->] (1) to (2);
	\draw[->,dashed] (1) to (3);
	\draw[->,dashed,bend right] (2) to (6);
	\draw[->] (2) to (4);
	\draw[->] (4) to (6);
	\draw[->,dashed] (4) to (7);
	\draw[->] (3) to (5);
	\draw[->,bend right] (5) to (7);
	\draw[->,dashed,bend left] (5) to (7);
	\draw[->,dashed] (3) to (4);
\end{tikzpicture}}} \hspace{4em}
    \subfloat[\label{sub:c}] {\scalebox{.8}{\begin{tikzpicture}[auto, thick,node distance=1cm,inner sep=1.5pt] 
	\node[] (0) [] {};
	\node[main] (1) [node distance=.6cm,below of=0] {$x_1$}; 
	\node[main] (2) [below left of=1] {$x_2$};
	\node[main] (3) [below right of=1] {$x_2$};
	\node[main] (4) [below of=2] {$x_3$};
	\node[] (5) [below of=3] {};
	\node[leaf] (6) [below of=4] {0};
	\node[leaf] (7) [below of=5] {1};
	\draw[->] (0) to (1);
	\draw[->] (1) to (2);
	\draw[->,dashed] (1) to (3);
	\draw[->,dashed,bend right] (2) to (6);
	\draw[->] (2) to (4);
	\draw[->] (4) to (6);
	\draw[->,dashed] (4) to (7);
	\draw[->] (3) to (7);
	\draw[->,dashed] (3) to (4);
\end{tikzpicture}}}
    \caption{(Ordered) BDDs representing function $f = (x_1 \land x_2 \land  \overline{x_3}) \lor (\overline{x_1} \land \overline{x_2} \land \overline{x_3}) \lor (\overline{x_1} \land x_2)$.
For node $v$, we draw $\bddvar{v}=i$ as $x_i$. Dashed lines lines represent low branches
($\bddlow{v}$) and solid lines high branches ($\bddhigh{v}$).
Only the BDD in (c) is completely reduced. The BDD in (a) can be reduced to (c) by merging the two isomorphic nodes for $x_3$, while the BDD in (b) can be reduced to (c) by removing the (right-most) redundant node~$x_3$.}
    \label{fig:example-bdd}
    \centering \vspace{-1em}
\end{figure}

\begin{wrapfigure}[3]{r}{5em}
\vspace{-3.0em}
    \centering
    \scalebox{.9}{\begin{tikzpicture}[node distance=10mm, inner sep=1.5pt, auto, thick] 
	\node[] (0) [] {};
	\node[main,label=right:{$v$}] (1) [node distance=7mm,below of=0] {$x_i$};
	\node[] (2) [below left of=1] {$v[0]$};
	\node[] (3) [below right of=1] {$v[1]$};
	\draw[->] (0) to node [] {$f^v$} (1) ;
	\draw[->,dashed] (1) to (2);
	\draw[->] (1) to (3);
\end{tikzpicture}}
\vspace{-1.5em}
\end{wrapfigure}
A non-terminal node $v$ with $\bddvar{v} = i$, shown right,
can be read as the \concept{Shannon decomposition} \textit{``if $x_i = 1$ then $\bddhigh{v}$ else $\bddlow{v}$.''}
The function represented by the node $v$, call it $f^v$, is thus given by
\begin{align}
f^v(x_1, \dots, x_n) \defn
 \begin{cases}
    x_i f^{v[1]} \lor \overline{x_i} f^{v[0]} & \text{if $v \notin \set{0,1}$,} \\
    v & \text{if $v \in \set{0,1}$.} 
 \end{cases}
 \label{eq:sub-function-dd}
\end{align}

\noindent
The definition in \autoref{eq:sub-function-dd} shows that a conditioned subfunction $\con {f^v}{\vec a}$, as in \autoref{eq:sub-function} below,
is represented by a decision diagram node, namely
the one following the path $v[\vec a] \defn v[a_1][a_2]\dots[a_k]$, assuming the BDD is ordered and no variables are skipped.
\begin{align}
        f^v_{|a_1, \dots, a_k}(x_1,\dots, x_n) 
        \defn f^v(x_1 = a_1, \dots, x_k = a_k, x_{k+1}, \dots, x_n)
    \label{eq:sub-function}
\end{align}

The insight that BDDs exploit to realize succinct representations of commonly-encountered Boolean functions is that  many  subfunctions for different $\vec a,\vec b \in \set{0,1}^*$ can be isotropic, i.e., $\con f{\vec a} = \con f{\vec b}$. Take $\con f{00} = \con f{11}$  in \autoref{sub:a}.
In the diagram, this means that the isomorphic subgraphs below the nodes representing $\con f{00}$ and $\con f{11}$ can
be merged, as in \autoref{sub:c}.

An ordered BDD is \concept{reduced} when, in addition to isomorphic sub-graph 
merging, all \concept{redundant nodes} (nodes $v$ with $v[0]=v[1]$) are removed.
Take $\con f{010} = \con f{011} = \con f{01} = 1$, in \autoref{sub:c}.  
Removed redundant nodes can be reconstructed by recognizing that a variable is skipped on a path, as done \autoref{sub:b}.

Reduced and ordered BDDs (ROBDDs) are canonical representations of Boolean functions, i.e., any two functions with the same truth table are uniquely  represented by (the root node of) an ROBDD.
Canonicity allows for equivalence checking in constant time through hashing of nodes $v$ as tuples \tuple{\bddvar v, \bddhigh{v},\bddlow{v}}
	(provided that nodes \bddhigh{v},\bddlow{v} are already canonically represented, i.e., the BDD is build in a bottom-up fashion). 
This in turn allows efficient manipulation operations ($\land, \lor, \dots$) as discussed~below.

In this text, we fix the variable order  $x_1 < x_2 < \dots < x_n$.
For conciseness, we will often consider quasi-ROBDD, which are ROBDDs where
all redundant nodes are reconstructed (e.g. \autoref{sub:b}). In many settings,
this stronger definition does not lose generality, as quasi-ROBDD are also canonical,
and at most a factor $\frac n2$ larger than an ROBDD for the same function~\cite{knuth2011taocp4a}.
At the same time, because for all quasi-ROBDD nodes $v$ we have $f^v_{\vec a} = v[\vec a]$,
this assumption greatly simplifies algorithm representation and reduces cases in proofs.
We denote with \concept{level} $i$  the set of all nodes with the  variable label $i$.

\subsection{Multi-valued decision diagrams} 
\label{sec:mdds}
Multi-valued decision diagrams (MDDs) \cite{kam1998multi,sanner2005affine} are a generalization of BDDs for encoding functions $\mathcal{D}_1 \times  \dots \times \mathcal{D}_n \to \{0, 1\}$, where 
 $\mathcal{D}_i=\{0,1,\dots,m-1\}$ for some $m$ and all $i$.
Each MDD node with variable $x_i$ has $m$ outgoing edges, each with a label in $\mathcal{D}_i$. The interpretation of following an edge remains the same as for BDDs: for an MDD which encodes a function $f$ and has root node $v$ with $\bddvar{v}=i$, following an edge with label $a \in \mathcal{D}_i$ leads to an MDD which encodes the sub-function $f|_{x_i = a}$.

\begin{figure}[b]
    \centering
    \vspace{-1em}
    \subfloat {\scalebox{.8}{\begin{tikzpicture}[node distance=3mm and 3mm, thick, edgelabel/.style={fill=white,rounded corners=4pt,inner sep=1.5pt}]
    
	\node[] (root) [] {};
	\node[main] (r0) [node distance=3mm,below=of root,inner sep=1.5pt] {$x_1$};
	
	\begin{scope}[node distance=6mm and 7mm,inner sep=1.5pt]
	\node[main,inner sep=1.5pt] (w1) [below=of r0,,label=left:{$B$}] {$x_2$};
	\node[main,inner sep=1.5pt] (w0) [left=of w1,label=left:{$A$}] {$x_2$};
	\node[main,inner sep=1.5pt] (w2) [right=of w1] {$x_2$};
	\node[leaf] (t0) [below=of w0] {1};
	\node[leaf] (t1) [below=of w1] {1};
	\node[leaf] (t2) [below=of w2] {1};
	
	\draw[->] (root) to (r0);
    
	\draw[->] [out=-170,in=90] (r0) to node[edgelabel] {0} (w0);
	\draw[->] [out=-125,in=45] (r0) to node[edgelabel] {2} (w0);
	\draw[->] (r0) to node[edgelabel] {5} (w1);
	\draw[->] [out=-15,in=105] (r0) to node[edgelabel] {7} (w2);
	
	\draw[->] [out=-135,in=135] (w0) to node[edgelabel] {0} (t0);
	\draw[->] [out=-90,in=90] (w0) to node[edgelabel] {1} (t0);
	\draw[->] [out=-45,in=45] (w0) to node[edgelabel] {2} (t0);
	
	\draw[->] [out=-120,in=120] (w1) to node[edgelabel] {1} (t1);
	\draw[->] [out=-60,in=60] (w1) to node[edgelabel] {2} (t1);
	
	\draw[->] [out=-120,in=120] (w2) to node[edgelabel] {0} (t2);
	\draw[->] [out=-60,in=60] (w2) to node[edgelabel] {2} (t2);
	\end{scope}
\end{tikzpicture}}} \hspace{4em}
    \subfloat {\scalebox{.8}{\begin{tikzpicture}[node distance=6mm and -0.3mm, thick]
	\node[] (root) [] {};
	\node[lddnode] (r0) [node distance=3mm,below=of root,label=left:{$x_1:$}] {0};
	\node[lddnode] (r1) [right=of r0] {2};
	\node[lddnode] (r2) [right=of r1] {5};
	\node[lddnode] (r3) [right=of r2] {7};
	\node[lddnode] (w0) [below=of r0,label=left:{$x_2:$}] {0};
	\node[lddnode] (w1) [right=of w0] {1};
	\node[lddnode] (w2) [right=of w1] {2};
	\node[] (skip1) [right=of w2] {};
	\node[] (skip2) [right=of skip1] {};
	\node[lddnode] (w3) [right=of skip2] {0};
	\node[lddnode] (w4) [right=of w3] {2};
	\node[leaf] (terminal) [below=of w2] {1};
	
	\draw[->] (root) to (r0);
	
	\draw[->] (r0) to (w0);
	\draw[->] (r1) to [out=-90,in=75] (w0);
	\draw[->] (r2) to [out=-90,in=75] (w1);
	\draw[->] (r3) to [out=-90,in=105] (w3);
	
	\draw[->] (w0) to [out=-90,in=155] (terminal);
	\draw[->] (w1) to [out=-90,in=125] (terminal);
	\draw[->] (w2) to [out=-90,in=90] (terminal);
	\draw[->] (w3) to [out=-90,in=55] (terminal);
	\draw[->] (w4) to [out=-90,in=25] (terminal);
\end{tikzpicture}}}
    \caption{An MDD (left) and LDD (right) which both encode the set 
    $\{ \langle 0,0 \rangle, \allowbreak
        \langle 0,1 \rangle, \allowbreak
        \langle 0,2 \rangle, \allowbreak
        \langle 2,0 \rangle, \allowbreak
        \langle 2,1 \rangle, \allowbreak
        \langle 2,2 \rangle, \allowbreak
        \langle 5,1 \rangle, \allowbreak 
        \langle 5,2 \rangle, \allowbreak
        \langle 7,0 \rangle, \allowbreak
        \langle 7,2 \rangle \}$.
        Arrays represent right-ward sibling chains in the LDD.
        To improve legibility, 
        we omit edges pointing to the 0 terminal and replicate the 1 terminal for MDDs.}
    \label{fig:mdd-ldd-example}
\end{figure}

Similar to BDDs, MDDs are typically reduced by merging isomorphic sub-graphs.
However, unlike ROBDDs, redundant nodes are usually not removed in MDDs, which means variables are never skipped on any path. 
So an MDD with $m=2$ is a Quasi-ROBDD.
\autoref{fig:mdd-ldd-example} shows an example.

A list decision diagram (LDD) \cite{blom2008symbolic} is the Knuth transform of an MDD into a left-child right-sibling binary tree. 
Siblings (right-ward chains) are stored as (ordered) linked lists, which allows the reuse of common sibling suffixes, as shown in the example in \autoref{fig:mdd-ldd-example} (e.g. $f_{|5}$ reuses a part of the siblings of $f_{|2} = f_{|0}$).

Finally, a BDD or MDD representing a function $f(x_1, \dots, x_n)$ can also be interpreted as a set of strings $\vec a$ of length $n$, according to the characteristic function \set{\vec a \mid f(\vec a) = 1}. So the BDDs in \autoref{fig:example-bdd} all represent $\{000, 010, 011, 110\}$.

\subsection{Decision diagram operations}

What makes BDDs and MDDs so useful, aside from their possible succinctness, is that many 
manipulation operations, such as disjunction (set union) and conjunction (set intersection), can be performed in polynomial time in the number of decision diagram nodes in the operands~\cite{bryant1986graph,amilhastre2014compiling}.
While other operations  that have been shown to be \NP-complete~\cite{mcmillan1992symbolic}, such as unbounded existential quantification and thus also image computation~\cite{mcmillan1992symbolic}, 
	are often still efficient in practice~\cite{burch1992symbolic}.

{
\setlength{\interspacetitleruled}{0pt}
\setlength{\algotitleheightrule}{0pt}
\begin{algorithm}[b]
\SetKwFunction{FApply}{Union}
\Fn{\FApply{$A, B$}}{
     \vspace{-1.2em}\Comment*[r]{For quasi-ROBDDs $A$ and $B$ on $n$ variables.}
     \BlankLine
    \lIf{$A = 0$}{\Return{$B$}}
    \lIf{$B = 0$}{\Return{$A$}}
	\lIf{$A = 1 \lor B = 1$}{\Return{1}}
	\BlankLine
	\lIf{res $\gets$ cache[$\fname{Union},A,B$]}{\Return{res}} \label{line:cache-lookup}
	\BlankLine

   $x \gets  \bddvar A$
   \Comment*[r]{By virtue of quasi reduction \bddvar{$A$} = \bddvar{$B$}.}   
	$L \gets$ \FApply{$\e A0, \e B0$} \; \label{line:call-low} 
	$H \gets$ \FApply{$\e A1, \e B1$} \scalebox{.95}{\begin{tikzpicture}[node distance=10mm, inner sep=1.5pt, auto, thick, overlay, xshift=1.8cm] 
	\node[] (0a) [] {};
	\node[main] (1a) [node distance=7mm,below of=0a] {$x$};
	\node[] (2a) [below left  = 4mm and -1.5mm of 1a] {$A[0]$};
	\node[] (3a) [below right = 4mm and -1.5mm of 1a] {$A[1]$};
	\draw[->] (0a) to node [] {$A$} (1a) ;
	\draw[->,dashed] (1a) to (2a);
	\draw[->] (1a) to (3a);
	
	\node[] (union) [right = .4cm of 1a] {$\union$};
	
	\node[main] (1b) [right = .4cm of union] {$x$};
	\node[] (0b) [node distance=7mm, above of=1b] {};
	\node[] (2b) [below left  = 4mm and -1.5mm of 1b] {$B[0]$};
	\node[] (3b) [below right = 4mm and -1.5mm of 1b] {$B[1]$};
	\draw[->] (0b) to node [] {$B$} (1b) ;
	\draw[->,dashed] (1b) to (2b);
	\draw[->] (1b) to (3b);
	
	\node[] (equals) [right = .8cm of 1b] {$=$};
	
	\node[main] (1c) [right = 1.2cm of equals] {$x$};
	\node[] (0c) [node distance=7mm, above of=1c] {};
	\node[] (2c) [below left  = 4mm and 0mm of 1c] {$A[0] \union B[0]$};
	\node[] (3c) [below right = 4mm and 0mm of 1c] {$A[1] \union B[1]$};
	\draw[->] (0c) to node [] {res} (1c) ;
	\draw[->,dashed] (1c) to (2c);
	\draw[->] (1c) to (3c);
\end{tikzpicture}} \; \label{line:call-high} 
	res $\gets$ \fname{MakeNode}($x, L, H$)\; 
	\BlankLine
	cache[$\fname{Union},A,B$] $\gets$ res\; \label{line:cache-put}
	\BlankLine
	\Return{res}
}

\end{algorithm}
}

As an example, the algorithm below computes the union of two quasi-reduced
diagrams
$A$ and $B$,
i.e., $A \cup B$ (or in the functional interpretation: $A \lor B$).
Because any BDD is defined by its root node, the arguments $A$ and $B$ are simply given as nodes.
The algorithm first considers leafs as a base case, treating them according to the semantics of $\lor$.
On \autoref{line:call-low} and \ref{line:call-high}, the function is called recursively on the children of the input BDDs, synchronizing on the low and high branches. The results from these recursive calls are then combined with the $\fname{MakeNode}(x, L, H)$ function which creates a \emph{reduced BDD node} $v$ with $v[0]= L$, $v[1] = H$ and $\bddvar v=x$.
To ensure reduction, it returns $L$ ($=H$) when the node is redundant and looks up the tuple $\tuple{x, L, H}$ in a \concept{unique table}, as discussed in \autoref{sec:bdd}.
For MDDs, we assume a function $\fname{MakeNode}(x, A_0, \dots, A_{m-1})$ which returns a (quasi-)reduced MDD node and, for notational convenience, takes $m+1$ positional arguments: a variable $x$, and one MDD node $A_i$ for each of its $m$ children (some of which can be 0).

Lastly, it is important to realize that a decision diagram with $|V|$ nodes can have $\exp(|V|)$ paths from the root to a terminal node. To achieve polynomial runtimes,
decision diagram operations use top-down dynamic programming (see \autoref{line:cache-lookup},~\ref{line:cache-put}).
	This ensures that different paths leading to the same (pairs of) nodes are caught by the cache, avoiding recomputation.



\subsection{Encoding symbolic transition systems}
\label{sec:symbolic-ts}
A symbolic transition system is a tuple $(\vec x, \sspace, R, S_{\text{init}})$, where
$\vec x$ is a tuple of Boolean variables $\vect{x_1,\dots,x_n}$, 
$\sspace = \set{0,1}^{n}$ is the state space (including unreachable states), $R \subseteq \sspace \times \sspace'$ is a transition relation, and $S_{\text{init}} \subseteq \sspace$ is a set of initial states. The relation $R$ is a constraint over variables $\vec x, \vec x'$, where $\vec x$ encodes the source states and $\vec x'$ (consisting of primed copies of $\vec x$) encodes target states.
While $R$ monolithically encodes the system's behavior, we also consider a local variant, discussed in the example which follows.

As an example we give a transition system which captures the dining philosophers problem. We have $k$ processes (philosophers) and $k$ resources (forks). Each process $P_i$, with $i \in \{1,\dots,k\}$, attempts to allocate two resources: a fork $i$ on the left and a fork $j = ((i-1) \bmod k) + 1$ on the right. If a fork is unavailable the philosopher waits until it becomes available.
To describe the state space we use $3k$ Boolean variables: $\vec x = (a_1, l_1, r_1, \dots, a_{k}, l_{k}, r_{k})$, where $a_i$ $(\overline{a_i})$ indicates fork $i$ is (not) available, $l_i$  $(\overline{l_i})$ indicates philosopher $i$ does (not) hold a fork in their left hand, and $r_i$  $(\overline{r_i})$ idem for their right hand.
The starting state $S_{\text{init}} = (a_1, \overline{l_1}, \overline{r_1}, \dots, a_{k}, \overline{l_{k}}, \overline{r_{k}}$.)

To define $R_i$ for $k > 1$ processes, we define three local relations $R_i^{m}$ that implement picking up and putting down the left/right fork, and eating.
\begin{align*}
    R_i^1 &= 
    (a_i \oplus a_i') \land ({l_i} \oplus l_i')
    & & \triangleright \texttt{ pick up / put down left} \\
    R_i^2 &= 
    ({r_i} \oplus r_i') \land (a_j \oplus {a_j'})
    & & \triangleright \texttt{ pick up / put down right} \\
    R_i^3 &= 
    l_i \land l_i' \land r_i \land r_i'
    & & \triangleright \texttt{ eat (hold on to both forks)}
\end{align*}
Let $\svars{R_i^m}$ be the set of (primed and unprimed) variables in $R_i^m$.
Notice that the support of each local relation contains a different subset of the target variables $\vec x'$. To 
ensure that the other target variables are not left unconstrained,
we need to add the constraint $x \iff x'$ for all missing variables  $x' \in \vec x'$. For instance, pick up / put down left should be extended as:
\begin{align*}
    &  (a_i \oplus a_i') \land ({l_i} \oplus l_i') \land (r_i \iff r_i') \land \Land_{j\neq i}  (a_j \iff a_j' \land
    r_j \iff r_j' \land l_j \iff l_j')
    & & 
\end{align*}
The same needs to happen when merging multiple processes $R_i$ into a single global relation $R$, 
i.e. $R = \Lor_{i=1}^{k} \mathcal{R}_i$, where $\mathcal{R}_i \defn R_i \land  \Land_{x' \in \vec x' \setminus \svars{R_i}} x \iff x'$.
\autoref{sec:prelim-reach} discusses how extending local relations in this way can be avoided.

If a transition relation $R$ is composed of $R_i$'s, where each $\svars{R_i}$ is a small subset of $\vec x, \vec x'$, we say that $R$ has high locality. If $R$ cannot be split up into such partial relations we say that $R$ has low locality, or is \textit{monolithic}.

\subsection{Reachability with decision diagrams}
\label{sec:prelim-reach}
For a transition system with $n$ Boolean variables $\{x_1, \dots, x_n\}$, a single state is given by a bit string $s \in \{0,1\}^n$. A set of states $S \subseteq \{0,1\}^n$ can be encoded in a BDD. 
Likewise, a transition relation $R \subseteq \{0,1\}^n \times \{0,1\}^n$ can be encoded in a BDD with $2n$ variables.
The variables are ordered by interleaving the source state variables $\vec x = (x_1,\dots, x_n)$ with target state variables (primed copies: $x_1',\dots, x_n'$), i.e., ($x_1, x_1', \dots, x_n, x_n'$), as it is both convenient for the implementation of BDD algorithms, and it reduces the diagram size.
%
%
To compute the successors to an initial set of states $S$, we can then use the
decision diagram operation $\fname{Image}(S,R)$ 
\cite{mcmillan1992symbolic,vandijk2013multi}:
\begin{align}
\fname{Image}(S,R) = (\exists x_1, \dots, x_n : (S \land R))_{[x_1', \dots, x_n' \,\,:=\,\, x_1, \dots, x_n]},
\label{eq:image}
\end{align}
where $[x' := x]$ indicates the relabeling of the target variables to source \mbox{variables}. 
The \fname{Image} operation can also be implemented for partial relations $R_i$, with the benefit that existential quantification only needs to happen over $\svars{R_i}$, rather than over all variables. The union of the image under each $R_i$ separately equals the image under the global transition relation:
\begin{align*}
\exists \vec x : ((\mathcal{R}_1 \lor \dots \lor \mathcal{R}_k) \land S)_{[\vec x' \,\,:=\,\, \vec x]} 
	&= (\exists  \vec x_1' : R_1 \land S)_{[\vec x' \,\,:=\,\, \vec x]} \lor \dots \lor \\
	&\phantom{=.} (\exists \vec x_k' : R_k \land S)_{[\vec x' \,\,:=\,\, \vec x]}
\end{align*}
where $\vec x'_i = \svars{R_i} \cap \vec x'$ and  $\mathcal{R}_i$ is the extension of $R_i$ as defined in \autoref{sec:symbolic-ts}.

With \fname{Image}, the set of reachable states can be computed by repeatedly applying $R$ to a growing set of reachable states until no new states are found, which we denote with $S.R^*$.

Finally, decision diagrams often represent relations more succinctly when source variables
$\vec x$ are interleaved with target variables $\vec x'$ in the order~\cite{mcmillan1992symbolic}.

\subsection{Saturation}
\label{sec:prelim-saturation}
Saturation \cite{ciardo2001saturation} is a method for computing reachability that exploits locality of transitions. Aside from the initial states $S_{\text{init}}$, the algorithm takes as input several local transition relations $R_i$ (see \autoref{sec:symbolic-ts}), ordered such that $\bddvar{R_i} \geq \bddvar{R_{i+1}}$. 

To illustrate how saturation works, let us give an example. Say we have a global state space given by $\{x_1, \dots, x_4\}$, and three partial relations $R_i$ with $\svars{R_1} =$ $\{x_3, x_3', x_4, x_4'\}$, $\svars{R_2} =$ $\{x_2, x_2', x_3, x_3'\}$, and $\svars{R_3} =$  
\begin{wrapfigure}[5]{r}{2.5cm}
    \centering
    \vspace{-0.95cm}
    \scalebox{0.8}{\parbox{.5\linewidth}{%
    \begin{align*}
        \begin{blockarray}{cccc}
        & R_1 & R_2 & R_3 \\
        \begin{block}{r(ccc)}
          x_1\phantom{/} & 0 & 0 & 1 \\
          x_2\phantom{/} & 0 & 1 & 1 \\
          x_3\phantom{/} & 1 & 1 & 0 \\
          x_4\phantom{/} & 1 & 0 & 0 \\
        \end{block}
        \end{blockarray} 
    \end{align*}
    }}
\end{wrapfigure}
$\{x_1, x_1', x_2, x_2'\}$.
The ``dependency matrix'' on the right visualizes the dependencies of the partial relations on each of the variables.
The saturation algorithm traverses the decision diagram of a set of states $S$ and \textit{saturates} the nodes from the bottom-up. What this means in the case of the example is that the partial relation $R_1$ is applied to all nodes $v$ in $S$ with $\bddvar{v} = 3$, until $v$ has converged.
After saturating this node $v$ the algorithm backtracks upwards in the decision diagram to a node $w$ with $\bddvar{w} = 2$, and saturates this node by exhaustively applying $R_2$, while eagerly saturating new nodes created below $w$.

While breadth-first search (BFS) suffers from large intermediate diagram sizes~\cite{valmari},
saturation often avoids this by ensuring that the lower levels of the decision diagram reach their final configuration early. Generally this works well if the (average) bandwidth (the distance between the first and the last non-zero entry in each row) of the dependency matrix is low. 
This occurs for example in asynchronous systems where processes mainly modify local variables or communicate only with ``neighboring processes'' through channels or shared variables dedicated to neighboring pairs. This is for example the case in the dining philosophers example given in \autoref{sec:symbolic-ts}.
Finding a variable order and organizing the partial relations such that both this bandwidth and sizes of the decision diagrams are minimized is generally hard (even finding an optimal variable order for a single BDD is \NP-complete \cite{bollig1996improving}) but good heuristics
exist~\cite{aloul2001faster,aloul2003force,amparore2017gradient,amparore2018decision,meijer2016bandwidth}.

While originally proposed for MDDs, saturation has since been implemented for BDDs as well \cite{vandijk2019saturation}.

\section{Decision diagram operation for reachability}
\label{sec:algorithm}

\subsection{For BDDs}
\label{sec:alg-bdd}
We present an operation $\reachbdd(S,R)$ which computes the reachable states $S.R^*$ for BDDs. As is typical in BDD operations, our algorithm splits the computation into recursive calls on smaller, factored BDDs, after which the results are composed again in the backtracking step. The way we factor $R$ is inspired by \cite{matsunaga1993computing}, where a BDD algorithm is given for computing the closure $R^*$ of $R$ (computing the closure $R^*$ is generally much more expensive~\cite[\textsection6]{matsunaga1993computing}, 
hence we want to compute $S.R^*$ directly for a given $S$). \reachbdd is given in \autoref{alg:reach-rec}, and its correctness is discussed in \autoref{sec:correctness}\arxiv{ and \aref{app:reach-correctness-proof}}{}.

{
\begin{algorithm}[H]
\SetKwFunction{Reach}{ReachBdd}
\Fn{\Reach{$S$, $R$}}{
     \vspace{-1.2em}\Comment*[r]{For quasi-ROBDDs $S,R$ on $n,2n$ variables.}
	\BlankLine
	\lIf{$S = 0$}{\Return{0}}
	\lIf{$R = 0$}{\Return{$S$}}
	\lIf{$S = 1$}{\Return{1}}
	\lIf{$R = 1 \land S \neq 0$}{\Return{1}}
	\BlankLine \label{line:missing-cache-lookup}
	\While{$S$ did not converge}{ \label{line:loop}
		$\e S{0} \gets \reachbdd(\e {S}0, \e R{00})$ \; \label{line:loop1} \label{line:reach00}
		$\e S{1} \gets \fname{Union}(\e S{1}, \fname{Image}(\e S{0}, \e R{01}))$ \; \label{line:loop2} \label{line:img01}
		$\e S{1} \gets \reachbdd(\e S{1}, \e R{11})$ \; \label{line:reach11}
		$\e S{0} \gets \fname{Union}(\e S{0}, \fname{Image}(\e S{1}, \e R{10}))$ \;  \label{line:loop-end} \label{line:img10}
	}
	\label{line:missing-cache-put}
	\BlankLine
	\Return{$\makenode(\bddvar{S}, \e S{0}, \e S{1})$}
}

\caption{A BDD operation for computing reachability. Cache lookup/insert for dynamic programming after \autoref{line:missing-cache-lookup} and \ref{line:missing-cache-put} are omitted.}
\label{alg:reach-rec}
\end{algorithm}
}

The algorithm recurses on the low and high child of a BDD $S$.
This splits the state space into $\sspace_{|0}$ (all states starting with a 0) and $\sspace_{|1}$.
The relation $R$ can be split up accordingly as shown in \autoref{fig:partition}.
The self loops in this figure represent $S_{|0}.R_{|00}^*$ and $S_{|1}.R_{|11}^*$,
and correspond to recursive calls to \reachbdd (\autoref{line:reach00},~\ref{line:reach11}). The results of these calls need to be propagated using image computation $S_{|i}.R_{|ij} = \fname{Image}(\e S{i}, \e R{ij})$ (\autoref{line:img01}, \ref{line:img10}) until $S$ has converged, so we incorporate a loop (\autoref{line:loop}).
For notational convenience we assume $\e S{0}$ and $\e S{1}$ are program variables to which we can assign new BDDs.

\begin{figure}[t]
\centering
\subfloat[]{\begin{tikzpicture}[node distance={25mm}, auto, thick, main/.style = {draw, circle}] 
	\node[main] (0) [] {$\sspace_{|0}$};
	\node[main] (1) [right of=0] {$\sspace_{|1}$};
	\draw[->, loop left] (0) to [] node {$R_{|00}$} (0);
	\draw[->, bend right] (0) to  node[yshift=-.18em] {$R_{|01}$} (1);
	\draw[->, bend right] (1) to [] node[yshift=.18em] {$R_{|10}$} (0);
	\draw[->, loop right] (1) to [] node {$R_{|11}$} (1);
	\node[label=right:{$R = \begin{pmatrix} R_{|00} & R_{|01} \\ R_{|10} & R_{|11} \end{pmatrix}$}] (2) [below left=1cm and 0cm of 0]{};
\end{tikzpicture} \label{fig:partition-sspace}}
\subfloat[]{\begin{tikzpicture}[auto, thick,node distance=1.cm,inner sep=1.5pt]
    \tikzstyle{temp}=[main, inner sep=0, minimum size=.55cm] 
	\node[] (0) [label=right:{$R$}] {};
	\node[temp] (1) [node distance=.6cm,below of=0] {$x_1$};
	\node[temp] (2) [below left=.5cm and 0.3cm of 1] {$x_1'$};
	\node[temp] (3) [below right=.5cm and 0.3cm of 1] {$x_1'$};
	\node[] (00) [below left=.5cm and 0cm of 2] {$R[00]$};
	\node[] (01) [below right=.5cm and -0.4cm of 2] {$R[01]$};
	\node[] (10) [below left=.5cm and -0.4cm of 3] {$R[10]$};
	\node[] (11) [below right=.5cm and 0cm of 3] {$R[11]$};
	
	\draw[->] (0) to (1);
	\draw[->,dashed] (1) to (2);
	\draw[->] (1) to (3);
	\draw[->,dashed] (2) to (00);
	\draw[->] (2) to (01);
	\draw[->,dashed] (3) to (10);
	\draw[->] (3) to (11);
\end{tikzpicture} \label{fig:partition-dd}}
\vspace{-0.3cm}
\caption{The state space $\sspace$ can be split up into states where the first variable equals 0, and states where the first variable equals 1 (\ref{fig:partition-sspace}), and the transition relation $R$ is split up accordingly. Because source and target variables are interleaved as usual, these partitions of $R$ can be easily accessed in the BDD structure (\ref{fig:partition-dd}).}
\label{fig:partition}
\vspace{-0.5cm}
\end{figure}


The base cases for the algorithm are as follows:
If the set of initial states or the transition relation is empty ($S=0$ or $R = 0$) there are no successors and the set of reachable states is the set of initial states. If the set of initial states contains all states ($S=1$), or if $R$ contains transitions from all states to all other states ($R=1$) and $S$ is not empty, then all states are reachable.


With the decision diagram framework Sylvan~\cite{vandijk2017sylvan}, we can parallelize decision diagram operations through \fname{Spawn}/\fname{Sync} commands, which respectively fork and join light-weight tasks~\cite{vandijk2017sylvan}. However, the order of (parallel) operations is something to take into account. In particular, \autoref{line:loop1} and \ref{line:loop2} from \autoref{alg:reach-rec} are dependent, so cannot be executed in parallel. In order to parallelize \reachbdd, we change the order of calls in this loop and introduce \reachbddpar in \autoref{alg:reach-rec-par}.


{
\begin{algorithm}[H]
\setcounter{AlgoLine}{5}
\While{$S$ did not converge}{
	$\fname{Spawn}(~\reachbddpar(\e S{0}, \e R{00})~)$  \Comment*[r]{Spawn call as task (fork)}
	$\e S{1} \gets \reachbddpar(\e S{1}, \e  R{11})$         \Comment*[r]{Call directly}
	$\e S{0} \gets \fname{Sync}$                      \Comment*[r]{Obtain task result (join)}
	\BlankLine
	$\fname{Spawn}(~\fname{Image}(\e S{1}, \e R{10})~)$  \Comment*[r]{Spawn call as task (fork)}
	$T_1 \gets \fname{Image}(\e S{0}, \e  R{01})$        \Comment*[r]{Call directly}
	$T_0 \gets \fname{Sync}$                      \Comment*[r]{Obtain task result (join)}
%
    \BlankLine
	$\fname{Spawn}(~\fname{Union}(\e S{0}, T_0)~)$  \Comment*[r]{Spawn call as task (fork)}
	$\e S{1} \gets \fname{Union}(\e S{1}, T_1)$         \Comment*[r]{Call directly}
	$\e S{0} \gets \fname{Sync}$                     \Comment*[r]{Obtain task result (join)}
}

\caption{\reachbddpar parallelizes the loop on \autoref{line:loop}-\ref{line:loop-end} in \autoref{alg:reach-rec}. The \fname{Image} and \fname{Union} calls are also parallelized~\cite{vandijk2015sylvan}.}
\label{alg:reach-rec-par}
\end{algorithm}
}



\subsection{Analysis}
To provide some intuition for the complexity behavior of \reachbdd, we provide two cases: one where \reachbdd is exponentially faster than BFS, and one where \reachbdd reduces to BFS.

\paragraph{Ideal case}
We give a concrete instance of a relation $R$ and an initial set of states $S_{\text{init}}$ for which \reachbdd performs exponentially better than a simple BFS.
Consider a transition relation $R$ which simply increases a (program) counter of $n$ bits. This counts form a starting state $S_{\text{init}} = 0 = \langle 00 \dots 0 \rangle $ in steps of 1 to $2^n-1 = \langle 11 \dots 1 \rangle$. 
As the state space is a line graph, the BFS algorithm discovers one new state every iteration, requiring $O(2^n)$ calls to the \fname{Image} function.

To illustrate the behavior of the \reachbdd algorithm, let us explicitly write all $R_{|ij}$ for $n=3$:
\begin{align*}
    R_{|00} &= 
    \begin{Bmatrix}
    \sttuple{\not\hspace{-.5mm} 000}{\not\hspace{-.5mm}001}, \\ 
    \sttuple{\not\hspace{-.5mm}001}{\not\hspace{-.5mm}010}, \\
    \sttuple{\not\hspace{-.5mm}010}{\not\hspace{-.5mm}011}
    \end{Bmatrix}
    &
    R_{|11} &=
    \begin{Bmatrix} 
    \sttuple{\not\hspace{-.5mm}100}{\not\hspace{-.5mm}101}, \\ 
    \sttuple{\not\hspace{-.5mm}101}{\not\hspace{-.5mm}110}, \\
    \sttuple{\not\hspace{-.5mm}110}{\not\hspace{-.5mm}111}
    \end{Bmatrix}
    &
    R_{|01} &=
    \begin{Bmatrix}
    \sttuple{\not\hspace{-.5mm}011}{\not\hspace{-.5mm}100}
    \end{Bmatrix}
    &
    R_{|10} &=
    \emptyset
    &
\end{align*}
While $R_{|00}$ and $R_{|11}$ represent different sets, the BDDs $\e R{00}$ and $\e R{11}$ are equal. 

For all non-terminal cases \autoref{alg:reach-rec} does the following:

{
\setlength{\interspacetitleruled}{-.4pt}%
\begin{algorithm}[H]
\setcounter{AlgoLine}{7}
	$\e S{0} \gets \reachbdd(\e S{0}, \e R{00})$ \Comment*[r]{computes states $S_0^{\text{all}}$}
	$\e S{1} \gets \fname{Union}(\e S{1}, \fname{Image}(\e S{0}, \e R{01}))$ \Comment*[r]{generates 'seed' state $S_1^{\text{init}}$}
	$\e S{1} \gets \reachbdd(\e S{1}, \e R{11})$ \Comment*[r]{computes states $S_1^{\text{all}}$}
	$\e S{0} \gets \fname{Union}(\e S{0}, \fname{Image}(\e S{1}, \e R{10}))$ \Comment*[r]{produces no new states}
\end{algorithm}
}
\noindent
First, \reachbdd computes all reachable states which start with a 0, let us call these $S_0^{\text{all}}$. Next, the \fname{Image} call produces exactly one new state, $\langle 100 \dots 0 \rangle$, which will act as a ``seed'' state for the next \reachbdd call. 
Since the BDDs of $S_0^{\text{init}}$ and $S_1^{\text{init}}$ are equal, just as the BDDs $\e R{00}$ and $\e R{11}$, the second \reachbdd call can be looked up from cache.
Finally, since $\e R{10} = \emptyset$, no new states will be added to $S_0$, and both $S_0$ and $S_1$ will have converged.

We find two things: first, all the reachable states are found in a single loop iteration, and second, each call to \reachbdd only generates one recursive call to \reachbdd (because the second call can always be looked up from cache). Overall, \reachbdd only makes $O(n)$ recursive calls to itself and to the \fname{Image} function.

Due the monolithic nature of the transition relation, saturation will behave like BFS in this case.

\paragraph{Bad case}
Here we provide an instance for which \reachbdd reduces to breadth-first search. 
From an arbitrary transition relation $R$ we create a new relation $R'$ for which \reachbdd behaves like to BFS. If $R$ is a relation over $2n$ variables $\{x_1, x_1', \dots x_n, x_n'\}$, we let $R'$ be a relation over $2(n+1)$ variables $\{x_0, x_0', x_1, x_1', \dots x_n, x_n'\}$. Specifically, let 
$
R' :=  x_0 \oplus {x}_0' \land R. 
$
The corresponding decomposition into sub-functions (as visualized in \autoref{fig:partition}) looks like
\begin{align*}
    R' = \begin{pmatrix}
    0 & R'_{|01} \\
    R'_{|10} & 0
    \end{pmatrix}.
\end{align*}
For a relation like this, the loop on \autoref{line:loop} relies entirely on the image computation steps to expand the set of reachable states, while the recursive calls never add any states.
This effectively turns \reachbdd into BFS.

\subsection{MDD generalization}
\label{sec:alg-mdd-ldd}
In this section, we generalize \reachbdd (\autoref{alg:reach-rec}) from a BDD operation to an MDD operation (\autoref{alg:reach-rec-mdds}, correctness discussed in \autoref{sec:correctness}\arxiv{ and \aref{app:reach-correctness-proof}}{}).
For simplicity, let us assume we have a MDD encoding a set of states of $n$ variables, each of which takes values from the same domain $\mathcal{D} = \{0,1,\dots,m-1\}$, and an MDD  which encodes the transition relation which has $2n$ variables ($n$ source variables and $n$ target variables). As shown below, the state MDD can be divided into $m$ parts, and the relation MDD into $m^2$ parts, similar to how the BDDs are split up in \autoref{fig:partition}. For ease of notation we denote $m - 1 = m'$:
\begin{align*}
S &= 
\begin{pmatrix}
S_{|0} \\ S_{|1} \\ \vdots \\ S_{|m'}
\end{pmatrix}
&
R &= 
\begin{pmatrix*}[c]
R_{|00} & R_{|01} & \cdots & R_{|0,m'} \\
R_{|10} & R_{|11} & \cdots & R_{|1,m'} \\
\vdots & \vdots & \ddots & \vdots \\
R_{|m',0} & R_{|m',1} & \cdots & R_{|m',m'}
\end{pmatrix*}
\end{align*}

Where the BDD algorithm iterates over four transition relations $R_{|ij}$, the MDD algorithm simply iterates over all $m^2$ relations $R_{|ij}$. When $i = j$, $R_{|ij}$ contains transitions for which the first variable stays the same, and we can call $\reachmdd(S_{|i}, R_{|ij})$. For all cases where $i \neq j$ we use $\fname{Image}(S_{|i}, R_{|ij})$ instead. 


%
%
%

\begin{algorithm}[H]
\SetKwFunction{ReachMDD}{ReachMdd}
\Fn{\ReachMDD{$S$, $R$}}{
     \vspace{-1.2em}\Comment*[r]{For MDDs $S,R$ on $n,2n$ variables.}
	\BlankLine
	\lIf{$S = 0$}{\Return{0}}
	\lIf{$R = 0$}{\Return{$S$}}
	\lIf{$S = 1$}{\Return{1}}
	\lIf{$R = 1 \land S \neq 0$}{\Return{1}}
	\BlankLine
	\label{line:missing-cache-lookup-mdd}
    
	\While{$S$ did not converge}{
		\For{$i,j \in \mathcal{D}$}{
			\If{$i = j$} {
				$\e S{i} \gets \reachmdd(\e S{i}, \e R{ij})$ \;
			}
			\Else{
			    $\e S{j} \gets \fname{Union}(\e S{j}, \fname{Image}(\e S{i}, \e R{ij}))$ \;
			}
			
		}
	}
	\label{line:missing-cache-put-mdd}
	\BlankLine
	\Return{$\makenode(\bddvar{S}, \e S{0}, \dots, \e S{m-1})$} \label{line:mdd-makenode}
}

\caption{An MDD operation for computing reachability. Cache lookup/insert for dynamic programming after \autoref{line:missing-cache-lookup-mdd} and \ref{line:missing-cache-put-mdd} are omitted.}
\label{alg:reach-rec-mdds}
\end{algorithm}

We note that \reachbdd does not generalize so well to MDDs, in the following sense: for BDDs, \autoref{alg:reach-rec} has two \reachbdd calls and two \fname{Image} calls inside the loop. However for MDDs, we get $O(m)$ \reachmdd calls and $O(m^2)$ \fname{Image} calls every loop iteration. In the MDD case, a larger part of the computation is no longer handled by recursive calls, but instead by image computations.

\subsection{Correctness}
\label{sec:correctness}
In this section we give a sketch of the correctness proofs for both \reachmdd (\autoref{th:reachmdd-proof}) and \reachbdd (\autoref{th:reachbdd-proof}). A complete proof can be found in  \arxiv{\aref{app:reach-correctness-proof}}{\cite[App. A]{arxivversion}}.

\begin{restatable}[]{theorem}{reachmddproof}
\label{th:reachmdd-proof}
Given two MDDs $S$ and $R$ with $n$ and $2n$ variables respectively, $\reachmdd(S,R)$ (\autoref{alg:reach-rec-mdds}) computes all the reachable states $S.R^*$.
\end{restatable}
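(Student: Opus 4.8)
The plan is to proceed by structural induction on the number of state variables $n$, using the induction hypothesis to justify the recursive $\reachmdd$ calls on the children (which range over $n-1$ variables), and to establish correctness of the \textbf{while} loop through a fixpoint argument rather than by tracking a bounded number of iterations. Throughout I write $S_{\text{init}}$ for the value of the program variable $S$ on entry, so that the claim reads $\reachmdd(S_{\text{init}},R)=S_{\text{init}}.R^*$, and I reserve $S$ for the mutating state set inside the loop.

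First I would dispatch the four base cases directly from the set semantics of the terminals, which also serve as the $n=0$ base of the induction: $S_{\text{init}}=0$ gives $S_{\text{init}}.R^*=\emptyset$; $R=0$ gives $S_{\text{init}}.R^*=S_{\text{init}}$; $S_{\text{init}}=1$ gives the full set; and $R=1$ with $S_{\text{init}}\neq 0$ reaches every state in one step. For the inductive step I would first record termination: by the induction hypothesis each recursive call $\reachmdd(\con Si,\con R{ii})$ returns a superset of $\con Si$, while \fname{Union} and \fname{Image} only add states, so every $\con Si$ is monotonically non-decreasing and bounded by the finite state space; hence $S$ reaches a fixpoint and the loop halts.

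Next I would prove the two inclusions separately. For soundness ($S \subseteq S_{\text{init}}.R^*$) I would argue, by induction over the operations performed, that every state ever placed in some $\con Si$ is $R$-reachable from $S_{\text{init}}$: a state added by $\reachmdd(\con Si,\con R{ii})$ is $\con R{ii}^{*}$-reachable from the current $\con Si$ by the induction hypothesis, and each $\con R{ii}$-transition lifts to an $R$-transition between two states both carrying leading value $i$; a state added by $\fname{Image}(\con Si,\con R{ij})$ is an $R$-successor via a transition that changes the leading value from $i$ to $j$. For completeness ($S_{\text{init}}.R^* \subseteq S$) I would exploit the fixpoint: at convergence the recursive call leaves each $\con Si$ closed under $\con R{ii}$ (induction hypothesis), and the \fname{Union} leaves $\fname{Image}(\con Sj,\con R{ji}) \subseteq \con Si$ for every $j$. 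Decomposing an arbitrary $R$-transition according to the leading values $(i,j)$ of its source and target then shows the converged $S$ is closed under all of $R$; since $S\supseteq S_{\text{init}}$ throughout and $S_{\text{init}}.R^*$ is the least $R$-closed superset of $S_{\text{init}}$, the inclusion follows. Combining the inclusions gives $S = S_{\text{init}}.R^*$, and $\makenode(\bddvar S,\con S0,\dots,\con S{m'})$ reassembles the children into the MDD for this set.

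The main obstacle I expect is the completeness direction, and specifically making precise that a single outer \textbf{while} loop suffices to capture reachability along paths that alternate arbitrarily many times between different leading-value blocks. The clean way around this is exactly the fixpoint formulation above: rather than counting block-crossings, I would only use that at a fixpoint $S$ is simultaneously closed under every diagonal relation $\con R{ii}$ (via the recursion) and every off-diagonal relation $\con R{ij}$ (via \fname{Image} followed by \fname{Union}), which together amount to closure under $R$. A secondary subtlety worth flagging is that a recursive call may close $\con Si$ under $\con R{ii}$, after which an \fname{Image} from another block injects fresh states into $\con Si$ that then require re-saturation; this is precisely why closure must be argued at the loop's fixpoint and not after any fixed pass, and why the monotonicity of $S$ (guaranteeing that such a fixpoint exists) is established first.
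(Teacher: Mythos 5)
Your proof is correct, but it takes a genuinely different route from the paper's. The paper derives \reachmdd by a chain of algorithm transformations starting from breadth-first search (itself justified by $S.R^* = \bigcup_k S.R^k$ via Knaster--Tarski): it splits the monolithic $\fname{Image}(S,R)$ into the $m^2$ calls on $(S_{|i}, R_{|ij})$ by Shannon decomposition, argues that wrapping the diagonal cases in an inner saturation loop preserves the result "because the outer loop runs until convergence," and then folds that inner loop into a recursive call. You instead give a direct invariant-based verification: structural induction on $n$, termination from monotonicity over a finite state space, soundness as an invariant that every state ever added is $R$-reachable from $S_{\text{init}}$, and completeness by observing that the converged $S$ is closed under every $R_{|ii}$ (via the inductive hypothesis applied to the final recursive calls) and every off-diagonal $R_{|ij}$ (via the stabilized $\fname{Image}$/$\fname{Union}$ steps), hence closed under $R$, and then invoking leastness of the reflexive-transitive closure. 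The two approaches buy different things: the paper's derivation exhibits \emph{where the algorithm comes from} and inherits correctness from BFS wholesale, but its pivotal transformation step (that eager inner saturation is equivalent despite arbitrary alternation between leading-value blocks) is asserted rather than argued. Your fixpoint formulation makes exactly that step precise --- closure need only be checked at the loop's fixpoint, not after any bounded number of block-crossings --- and your remark that an $\fname{Image}$ from another block can inject fresh states into an already-saturated $S_{|i}$ correctly identifies why the argument must be phrased this way. Your version is the more rigorous of the two; the paper's is the more explanatory.
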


\begin{proofsketch}
The correctness of \reachmdd can be shown by means of algorithm transformation from breadth-first search. The algorithm for BFS (given below) directly follows from the definition $S.R^* = \Union_{k=0}^\infty S.R^k$, as shown by the Knaster-Tarski theorem \cite{tarski1955lattice}.

{
\setlength{\interspacetitleruled}{0pt}
\setlength{\algotitleheightrule}{0pt}
\begin{algorithm}[H]
    \While{$S$ did not converge} {
        $S \gets \fname{Union}(S, \fname{Image}(S, R))$
    }
    \Return{$S$}
\end{algorithm}
}
\noindent
The two main steps in the transformation are as follows: first, using the Shannon decomposition, the computation of $\fname{Image}(S, R))$ can be split up into calls $\fname{Image}(S[i], R[ij]))$ for all $i,j\in \mathcal{D}$, the results of which can be combined with a $\fname{MakeNode}$ function as on \autoref{line:mdd-makenode} of \autoref{alg:reach-rec-mdds}.
Second, since we are ultimately computing $S.R^*$, the calls $\fname{Image}(S[i], R[ij]))$ can be replaced with $\reachmdd(S[i],R[ij])$ when $i = j$.
The algorithm \reachmdd follows directly from these two steps.
\end{proofsketch}

\begin{restatable}[]{corollary}{reachbddproof}
\label{th:reachbdd-proof}
Given two BDDs $S$ and $R$, $\reachbdd(S,R)$ (\autoref{alg:reach-rec}) computes all the reachable states $S.R^*$.
\end{restatable}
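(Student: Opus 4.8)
The plan is to obtain \autoref{th:reachbdd-proof} as an immediate instance of \autoref{th:reachmdd-proof}, exploiting that a (quasi-)ROBDD is exactly an MDD over the binary domain $\mathcal{D} = \{0,1\}$, as noted in \autoref{sec:mdds}. Thus the input BDDs $S$ and $R$ (on $n$ and $2n$ variables) may be read verbatim as MDDs with $m = 2$, and it suffices to check that \reachbdd is the $m = 2$ specialization of \reachmdd.

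First I would unroll the inner loop of \autoref{alg:reach-rec-mdds} for $\mathcal{D} = \{0,1\}$. This produces exactly four updates, one for each $(i,j) \in \{(0,0),(0,1),(1,0),(1,1)\}$. The two diagonal cases $i = j$ become the recursive calls $\reachmdd(\e S0,\e R{00})$ and $\reachmdd(\e S1,\e R{11})$, and the two off-diagonal cases $i \neq j$ become the $\fname{Union}/\fname{Image}$ updates along $\e R{01}$ and $\e R{10}$. These coincide with the four body lines of \autoref{alg:reach-rec}; likewise the base cases, the convergence loop, and the closing $\makenode$ step are identical once $m = 2$. The only discrepancy is the order of the updates inside the loop: \reachbdd applies the $(1,1)$ update before the $(1,0)$ update, which is a valid total order on the four pairs but not the one a naive nested iteration would use.

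The main obstacle is therefore to argue that this reordering is immaterial, i.e.\ that the correctness established in \autoref{th:reachmdd-proof} does not depend on the order in which the for-loop visits the pairs $(i,j)$. I would justify this with a chaotic-iteration argument: each of the four updates is monotone and inflationary (a $\fname{Union}$ or a recursive reachability call only ever adds states to $\e S0$ or $\e S1$), and the surrounding loop runs until $S$ reaches a fixpoint. By the standard fact that asynchronous/chaotic iteration of monotone inflationary operators converges to the same least fixpoint whenever every component is updated in each round --- which holds here since all four updates occur in every iteration --- the converged value is independent of the chosen intra-loop ordering. Equivalently, one can observe that the transformation-from-BFS used to prove \autoref{th:reachmdd-proof} (Shannon-splitting \fname{Image}, then replacing diagonal \fname{Image} calls by recursive reachability calls) never fixes an order on the $(i,j)$ pairs, so instantiating it at $m = 2$ with \reachbdd's ordering reproduces \autoref{alg:reach-rec} and delivers $S.R^*$ as its output.
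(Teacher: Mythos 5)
Your proposal is correct and follows essentially the same route as the paper's own proof: instantiate \reachmdd at $\mathcal{D}=\{0,1\}$, unroll the four $(i,j)$ updates, and observe that the only mismatch with \autoref{alg:reach-rec} is the intra-loop ordering, which is harmless because every update only adds states and the outer loop runs to convergence. Your chaotic-iteration phrasing is a slightly more formal version of the paper's one-line remark that the steps ``can be freely permuted,'' but the argument is the same.
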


\begin{proofsketch}
The correctness of \reachbdd can be shown from \autoref{th:reachmdd-proof} by taking an MDD with $\mathcal{D} = \{0,1\}$.
\end{proofsketch}

\section{Empirical evaluation}
\label{sec:experiments}

\subsection{Experimental Setup}
We implemented the new algorithms in the decision diagram package Sylvan \cite{vandijk2013multi,vandijk2015sylvan,vandijk2017sylvan}, using the task-based scheduling as described in \autoref{sec:alg-bdd}. Instead of MDDs, Sylvan supports LDDs (see \autoref{sec:mdds}), which can be seen as a particular implementation of MDDs.
For our experiments, we compare against the saturation procedure for BDDs and LDDs as implemented in Sylvan~\cite{vandijk2019saturation}.\footnote{\anonymize{The implementation of our algorithms, along with the repeatable experiments can be found here: \url{https://github.com/sebastiaanbrand/reachability}}{The code will be made available for the FM 2023 artefact submission, and will also be made publicly available. This to ensure anonymity.}}

\begin{wraptable}[6]{r}{6cm}\vspace{-.7em}
    \centering
    \caption{Overview of used benchmarks}\vspace{.4em}
    \label{tab:benchmarks}
\scalebox{.85}{
        \begin{tabular}{|c|c|c|}
        \hline
        \textbf{Source} & \textbf{Type} & \textbf{\#} \\\hline
        BEEM \cite{pelanek2007beem} & DVE & 300  \\
        MCC`16 \cite{mcc2016} & Petri-nets & 357 \\
        SPINS \cite{van2013spins} & Promela & 35 \\\hline
    \end{tabular}
}
\end{wraptable}

We use a number of existing benchmark sets to evaluate the performance of our algorithms. Specifically, we use the BEEM benchmark set, consisting of 300 instances of models in the DVE language, a benchmark set of over 300 Petri nets from the Model Checking Contest 2016 (MCC~2016) \cite{mcc2016}, and a small set of Promela models, compiled for the SpinS extension of LTSmin \cite{van2013spins}. 

For these benchmarks we use the same experimental setup as used in \cite{vandijk2019saturation}: we first use the model checker LTSmin \cite{kant2015ltsmin} to generate BDDs and LDDs for the (partial) transition relations and initial set of states, which are exported to \texttt{.bdd} and \texttt{.ldd} files. In LTSmin, the partial transition relations are ``learned'' on-the-fly, during the exploration~\cite{kant2015ltsmin}, as opposed to directly building the transition relations from a modeling language like NuXMV~\cite{cavada2014nuxmv}.
The variables in these BDDs and LDDs have been reordered by LTSmin with Boost's Sloan algorithm, since this reordering strategy has shown good results for saturation 
\cite{meijer2016bandwidth,amparore2017gradient,amparore2018decision}.

Since we are interested in comparing the \reach algorithms against saturation, we require a single transition relation for \reach. Therefore, as a preprocessing step, for the \reach algorithms only, we merge the partial relations from LTSmin into a monolithic transition relation over all variables. The run time of the merging of partial relations is included in the total run times reported in \autoref{fig:sat-vs-rec}.
This approach is slightly disadvantaging \reach because we could also change the setup to generate monolithic relations directly (as we do for Petri nets in the comparison with ITS-tools in \autoref{sec:experiments-results}), but using the setup from \cite{vandijk2019saturation} allows us to make a direct comparison with parallel saturation from \cite{vandijk2019saturation}.


We limit the run time of each reachability method to 10 minutes. 
The sequential benchmarks were performed on a machine with an AMD Ryzen 7 5800x CPU and 64 GB of available memory. The parallel benchmarks on a machine with 4 Intel Xeon E7-8890 v4 CPUs with 24 physical cores each (96 in total) and 2 TB of memory. Aside from the experiments which test parallelism specifically, all reported run times are for a single core.

\subsection{Results}
\label{sec:experiments-results}
\paragraph{Comparison with saturation}
A comparison between the runtimes of saturation and \reach is given in \autoref{fig:sat-vs-rec}.
In this discussion we differentiate between smaller models (run times $\leq$1 sec) and larger models (run times $\geq$1 sec).
For BDDs, \reach outperforms saturation on a large number of bigger DVE models, but encounters timeouts as well.
For LDDs, \reach appears competitive with saturation on DVE and Promela models, while the trend for the larger Petri net models shows \reach outperforming saturation by up to a factor 100.
For both BDDs and LDDs saturation is often faster on the smaller instances. This is in part due to the fact that the relative overhead of merging the partial relations is greater for smaller instances.


\begin{figure}[t]
\centering
\includegraphics[width=0.95\textwidth]{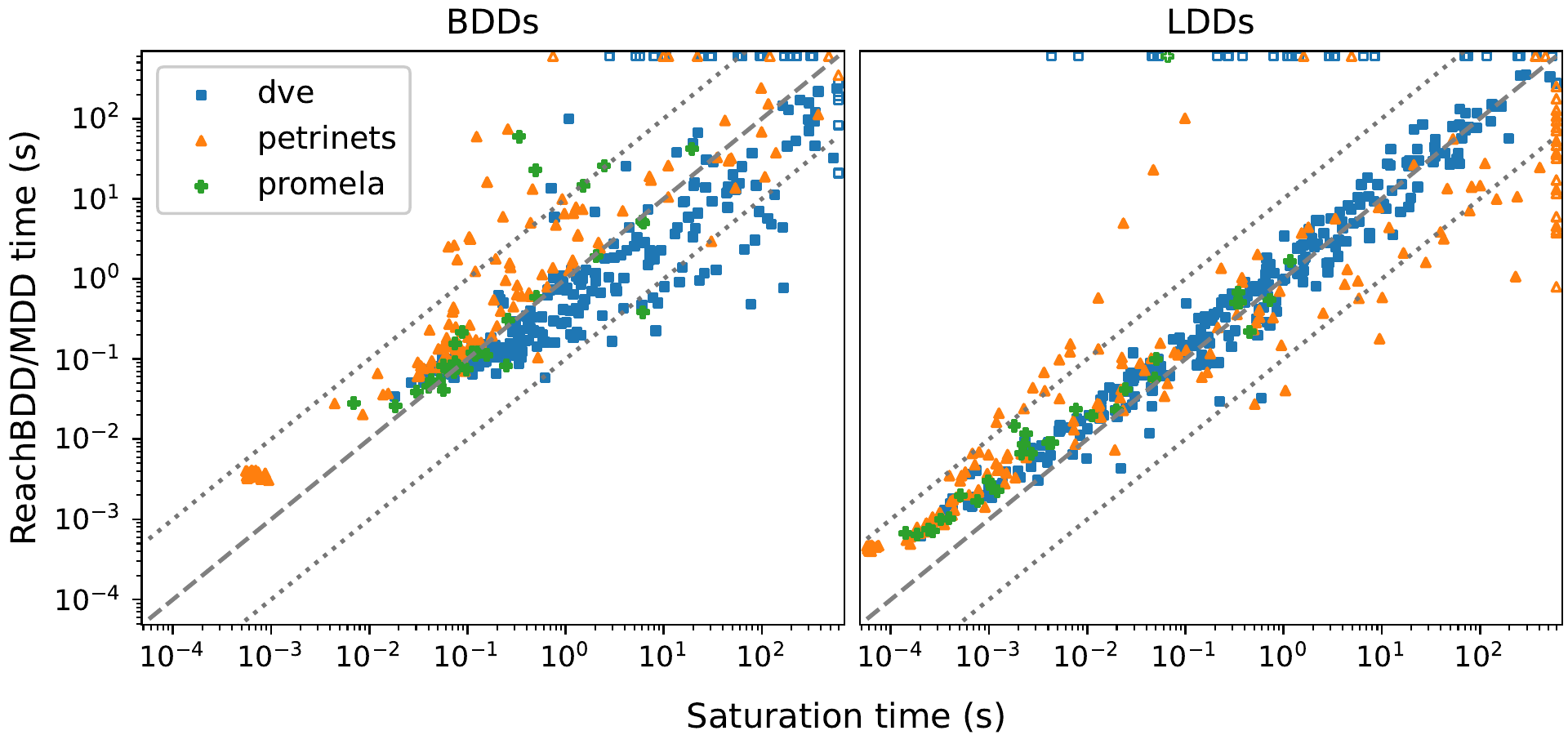}
\vspace{-.2cm}
\caption{The run time of finding all reachable states with BDDs (left) and LDDs (right) using  \reachbdd and \reachmdd versus saturation. 
Open markers indicate timeouts.
}
\label{fig:sat-vs-rec}
\end{figure}

\paragraph{Locality}
As discussed in \autoref{sec:prelim-saturation}, saturation is known to work well on transition systems where the partial transition relations exhibit locality. To get insight into how locality affects our new algorithms relative to saturation, we define the \textit{average relative bandwidth} as a metric for locality:
For $k$ partial relations $R_1, \dots, R_k$, sorted in an ascending order based on their first variable,  we can define a $k \times k$ matrix $M$ with entries $m_{ij}$ such that $m_{ij} = 1$ if $R_i$ and $R_j$ share at least one variable, and 0 otherwise. We define the bandwidth of a row $m_{i,*}$ as the distance between the first and the last non-zero element in this row. 
The average bandwidth is then simply the average of the bandwidths of all the rows $m_{i,*}$. The average \emph{relative} bandwidth is the average bandwidth divided by $k$. Note that this $k \times k$ matrix is different from (although related to) the $k \times n$ variable matrix shown in \autoref{sec:prelim-saturation}. The matrix $M$ and the locality metric derived from it are independent of the variable order in the decision diagram.

Plotting the run time of \reachbdd divided by the run time of saturation against this average relative bandwidth (\autoref{fig:relative-bandwidth}), we see that there is a negative correlation. Although not extremely strong, this correlation shows that the benchmarks on which saturation outperforms our algorithms are predominantly the instances where the partial relations are relatively local, while on instances with less locality our algorithms have a greater edge over saturation.

\begin{wraptable}[7]{r}{6.0cm}\vspace{-1.0em}
    \centering
    \vspace{-2.2em}
    \caption{Parallel speedups}
    \setlength\tabcolsep{0.35em}
\scalebox{.85}{
        \begin{tabular}{|c|c|ccc|}
        \hline
        \multirow{2}{*}{\textbf{algorithm}} & \multirow{2}{*}{\textbf{cores}} & \multicolumn{3}{c|}{\textbf{speedup}} \\
         & & $\mathbf{P_{95}}$ & $\mathbf{P_{99}}$ & $\mathbf{P_{99.5}}$\\\hline
        saturation \cite{vandijk2019saturation}  & 16 & $\times 8.1$ & $\times 11$ & $\times 11$\\
        \reachbddpar & 16 & $\times 6.6$ & $\times 8.3$ & $\times 8.8$ \\\hline
        saturation \cite{vandijk2019saturation}  & 64 & $\times 8.7$ & $\times 22$ & $\times 22$ \\
        \reachbddpar & 64 & $\times 5.6$ & $\times 9.1$ & $\times17$ \\\hline
    \end{tabular}
}
\end{wraptable}
\paragraph{Parallelism}
\autoref{fig:parallel-bench} shows the speedups obtained by \reachbddpar and the parallelized version of saturation from \cite{vandijk2019saturation} on 16 and 64 cores. The table on the right gives the $95^{\text{th}}$, $99^{\text{th}}$ and $99.5^{\text{th}}$ percentile of the speedups.
We see that for the 16 core runs \reachbddpar is able to keep up with \cite{vandijk2019saturation}, although falling slightly behind. For the 64 core runs, while \reachbddpar falls behind \cite{vandijk2019saturation} on the $99^{\text{th}}$ percentile, it is still able to achieve a $\times 17$ speedup on in the $99.5^{\text{th}}$ percentile, compared to \cite{vandijk2019saturation}'s $\times 22$.

\paragraph{Comparison against ITS-tools}
We also briefly compare how \reach performs against a state-of-the-art model checking tool. For this we pick ITS-tools \cite{thierry2015symbolic}, the overall highest scoring tool in the Model Checking Contest 2021 \cite{mcc2021}. 
Since here we compare against a different tool, as opposed to comparing algorithms within the same package, we need to slightly extend our setup.
We add two things: first we create a small program \texttt{pnml-encode} which builds the decision diagrams of the transition relations directly from the Petri net files. Second, we extend our LDDs with a (much simpler) version of homomorphisms which are also used in the set decision diagrams (SDDs) \cite{couvreur2005hierarchical}, which are a part of ITS-tools.

The results are given in \autoref{fig:its-comparison}. While ITS-tools outperforms \reachmdd on average, there is a significant number of instances where ITS-tools gives timeouts and \reach does not. Including these timeouts, \reach is faster than ITS-tools on 29\% of instances. This suggest that \reach could be useful as a complementary method in an ensemble tool, where a different method can be tried if the first one times out.

\begin{figure}[H]
\vspace{-0.3cm}
\centering
\includegraphics[width=0.85\textwidth]{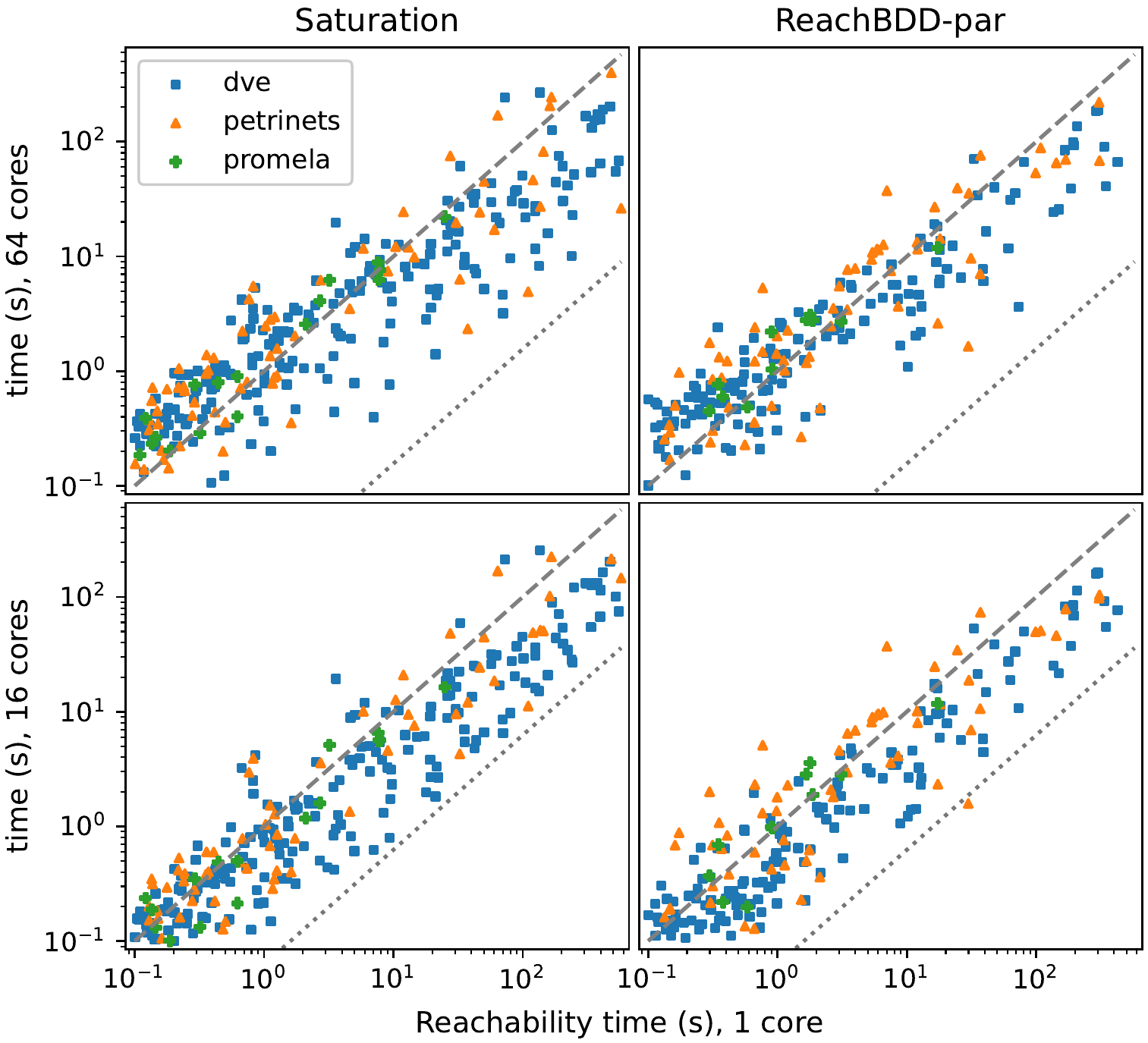}
\vspace{-0.4cm}
\caption{Parallel speedup for saturation (left column) and \reachbddpar (right column). The dotted diagonal lines indicates a speedup of a factor 16 (bottom row) and 64 (top row) relative to the single core performance.}
\label{fig:parallel-bench}
\end{figure}

\begin{figure}[H]
\vspace{-1.8cm}
\begin{minipage}{\linewidth}
      \centering
      \begin{minipage}{0.47\linewidth}
          \begin{figure}[H]
              \includegraphics[width=\linewidth]{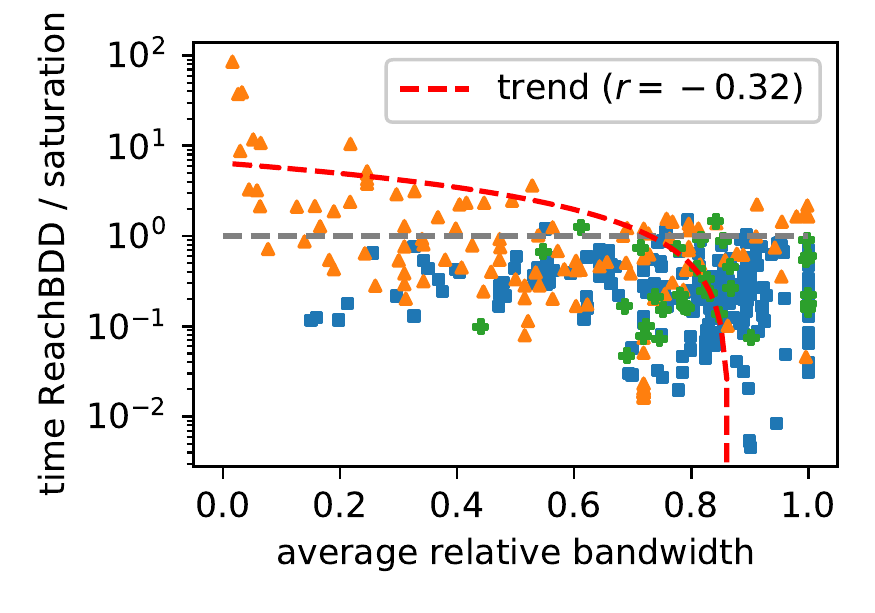}
              \vspace{-0.77cm}
              \caption{The effect of locality on the relative performance of \reachbdd. For \reachmdd $r=-0.11$.
              }
              \label{fig:relative-bandwidth}
          \end{figure}
      \end{minipage}
      \hspace{0.02\linewidth}
      \begin{minipage}{0.47\linewidth}
          \begin{figure}[H]
              \includegraphics[width=\linewidth]{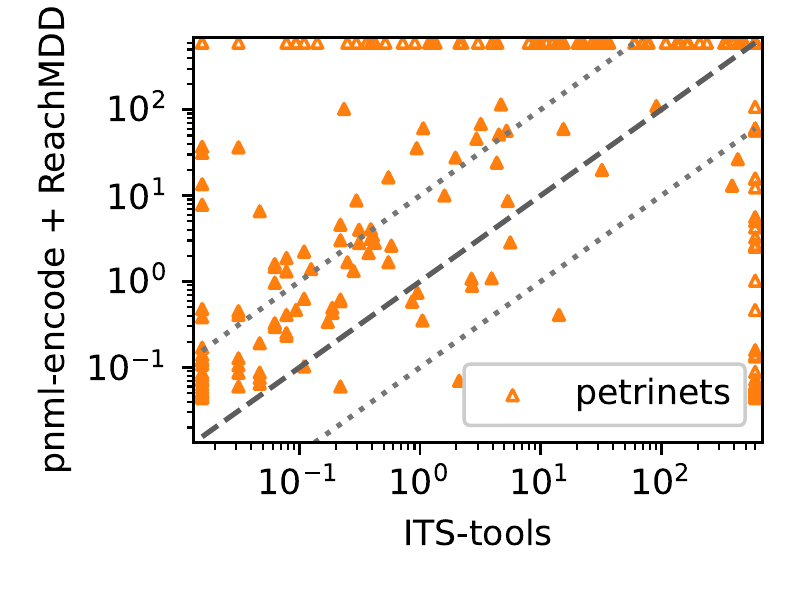}
              \vspace{-1.1cm}
              \caption{Comparison of \reachmdd against computing reachable states with ITS-tools.
              }
              \label{fig:its-comparison}
          \end{figure}
      \end{minipage}
\end{minipage}
\end{figure}

\section{Conclusion}
\label{sec:conclusion}
\paragraph{Summary}
We presented two new reachability operations for decision diagrams: \reachbdd and \reachmdd. In contrast to other approaches, like saturation, these operations can act on a single monolithic transition relation. Similar to saturation, these new algorithms build the decision diagram for the reachable states (at least partially) bottom-up. One advantage of these operations is their simplicity. This simplicity allows us for example to more easily 
parallelize the decision diagram operations, as demonstrated for \reachbddpar. 
Empirical evaluation of \reachbdd and \reachmdd on a large number of benchmark sets shows that the new operations are competitive with saturation, and tend to outperform saturation on larger instances.
Additionally, we find that \reachbddpar's peak parallel performance does not fall far behind that of saturation.
Finally, our empirical results show that the \reach operations
	can solve 29\% of instances faster than ITS-tools, 
	which indicates \reach can be useful as a complementary algorithm.


\paragraph{Future work}
Saturation still outperforms our algorithms on a number of instances. Many of these instances have a lot of locality, which is exactly the regime where saturation is expected to do very well. With further investigation, $\fname{ReachBdd}$ and $\fname{ReachMdd}$ could potentially be modified to perform better on such instances. The current bottleneck, as illustrated by our analysis, is the reliance on the standard \fname{Image} operation. Further integration of both operations could perhaps yield improvement.


\subsubsection*{Acknowledgements.}
This work was supported by the NEASQC project, funded by European Union’s Horizon 2020, Grant Agreement No. 951821.

\typeout{} 
\bibliographystyle{splncs04}
\bibliography{references}

\arxiv{
\newpage
\appendix
\section{REACH correctness proof}
\label{app:reach-correctness-proof}

\SetCustomAlgoRuledWidth{\textwidth}
\LinesNumberedHidden
\newcommand{\algpadding}{\vspace{0.05cm}}

This appendix contains the correctness proofs for \reachbdd and \reachmdd.
Since \reachbdd can be seen as a special case of \reachmdd where $\mathcal{D} = \set{0,1}$, we first prove the correctness of \reachmdd in \autoref{th:reachmdd-proof} and then show that this also holds for \reachbdd.

Normally the correctness of a decision diagram operation which implements a logical operation $\circ$ is derived by expanding the arguments via the Shannon decomposition and rewriting the logical formula for $\circ$ (see for example \cite[Eq.~1]{bryant1992symbolic}). However, to avoid introducing notation for the reflexive-transitive closure ($*$) in logical formulas, we instead choose to present a more intuitive proof on an algorithmic level.

\reachmddproof* 

\begin{proof}
Here we will show that \reachmdd can be derived by means of algorithm transformation from breadth-first search (BFS), which in turn follows directly from the definition of $S.R^*$.


Following the definition $S.R^* = \Union_{k=0}^\infty S.R^k$, as shown by the Knaster-Tarski theorem \cite{tarski1955lattice}, the reachable states can be computed with the algorithm below.

{
\algpadding
\setlength{\interspacetitleruled}{0pt}
\setlength{\algotitleheightrule}{0pt}
\begin{algorithm}[H]
    \While{$S$ did not converge} {
        $S \gets \fname{Union}(S, \fname{Image}(S, R))$
    }
    \Return{$S$}
\end{algorithm}
\algpadding
}

\noindent
The $\fname{Image}$ call can be split into calls on pairs $(S_{|i}, R_{|ij})$, $\forall i, j \in \mathcal{D}$. This is the Shannon decomposition on an algorithm level.

{
\algpadding
\setlength{\interspacetitleruled}{0pt}
\setlength{\algotitleheightrule}{0pt}
\begin{algorithm}[H]
    \While{$S$ did not converge} {
        \For{$i, j \in \mathcal{D}$}{
            $\e S{j} \gets \fname{Union}(\e S{j}, \fname{Image}(\e S{i}, \e R{ij}))$
        }
    }
    \Return{$\fname{MakeNode}(\bddvar{S}, \e S{0}, \dots, \e S{|\mathcal{D}|-1})$}
\end{algorithm}
\algpadding
}

\noindent
Because the loop is executed until $S$ converges, the algorithm above is equivalent to the following algorithm.

{
\algpadding
\setlength{\interspacetitleruled}{0pt}
\setlength{\algotitleheightrule}{0pt}
\begin{algorithm}[H]
    \While{$S$ did not converge} {
        \For{$i, j \in \mathcal{D}$}{
            \If{$i = j$} {
                \While{$\e S{i}$ not converged}{
                    $\e S{i} \gets \fname{Union}(\e S{i}, \fname{Image}(\e S{i}, \e R{ij}))$
                }
            }
            \Else{
                $\e S{j} \gets \fname{Union}(\e S{j}, \fname{Image}(\e S{i}, \e R{ij}))$ \\
            }
        }
    }
    \Return{$\fname{MakeNode}(\bddvar{S}, \e S{0}, \dots, \e S{|\mathcal{D}|-1})$}
\end{algorithm}
\algpadding
}

\noindent
Now we can identify the inner while loop as a recursive call to \reachmdd based on its resemblance to the initial BFS procedure.
In the following algorithm, we therefore replace the inner loop with a recursive call, inductively solving the problem on MDDs $\e S{i}$ and $\e R{ij}$ with $n-1$ and $2(n-1)$ variables respectively.

{
\algpadding
\setlength{\interspacetitleruled}{0pt}
\setlength{\algotitleheightrule}{0pt}
\begin{algorithm}[H]
    \While{$S$ did not converge} {
        \For{$i, j \in \mathcal{D}$}{
            \If{$i = j$} {
                $\e S{i} \gets \reachmdd(\e S{i}, \e R{ij})$ \\
            }
            \Else{
                $\e S{j} \gets \fname{Union}(\e S{j}, \fname{Image}(\e S{i}, \e R{ij}))$ \\
            }
        }
    }
    \Return{$\fname{MakeNode}(\bddvar{S}, \e S{0}, \dots, \e S{|\mathcal{D}|-1})$}
\end{algorithm}
\algpadding
}

Finally, we treat the terminal cases of \reachmdd. The following are cases where were at least one of $S$ and $R$ is a leaf node.
\[
\reachmdd(S, R) = S.R^* = 
\begin{cases}
0 & \text{if } S = 0 \\
S & \text{if } R = 0 \\
1 & \text{if } S = 1 \\
1 & \text{if } R = 1 \text{ and } S \neq 0
\end{cases}
\]
The terminal cases are derived as follows. If the set of initial states is empty ($S=0$) then the set of reachable states is also empty. If the transition relation is empty ($R = 0$) then, because we consider the \textit{reflexive}-transitive closure, the set of reachable states is the set of initial states. If the set of initial states contains all states ($S=1$) then the set of reachable states also contains all states (because of reflexivity this holds even when $R=0$). Finally, if the transition relation contains all transitions from any state to any other state ($R=1$) then, if $S$ is not empty, all states are reachable.

\qed
\end{proof}

\reachbddproof* 

\begin{proof}
Filling in $\mathcal{D} = \set{0,1}$ in \reachmdd and expanding the individual loop iteration we obtain the following.

{
\algpadding
\setlength{\interspacetitleruled}{0pt}
\setlength{\algotitleheightrule}{0pt}
\begin{algorithm}[H]
    \While{$S$ did not converge} {
        $\e S{0} \gets \reachbdd(\e S{0}, \e R{00})$ \\
        $\e S{1} \gets \fname{Union}(\e S{1}, \fname{Image}(\e S{0}, \e R{01}))$ \\
        $\e S{0} \gets \fname{Union}(\e S{0}, \fname{Image}(\e S{1}, \e R{10}))$ \\
        $\e S{1} \gets \reachbdd(\e S{1}, \e R{11})$
    }
    \Return{$\fname{MakeNode}(\bddvar{S}, \e S{0}, \e S{1})$}
\end{algorithm}
\algpadding
}
\noindent
Since each step only adds elements to $S$, and the loop runs until $S$ converges, the order of the steps within the loop can be freely permuted, which gives us exactly \autoref{alg:reach-rec}.
\qed
\end{proof}

}{}

\end{document}